\def\K{\mathcal{K}}
\def\KL{\mathcal{KL}}
\def\R{\mathbb{R}}
\def\N{\mathbb{N}}
\def\X{\mathcal{X}}
\def\T{\mathcal{T}}
\def\mer{\hfill $\circ$}
\def\qed{$\hfill\blacksquare$}
\def\es{\notag \\}
\def\lp{\left(}
\def\rp{\right)}
\def\lb{\left|}
\def\rb{\right|}
\newtheorem{theorem}{Theorem}[section]
\newtheorem{definition}[theorem]{Definition}
\newtheorem{lemma}[theorem]{Lemma}
\newtheorem{assumption}[theorem]{Assumption}
\newtheorem{prop}[theorem]{Proposition}
\theoremstyle{remark}
\journal{} 
\begin{document}

\begin{frontmatter}

\title{{\color{black}On} the stability of nonlinear sampled-data systems and their continuous-time limits}


\address{Centro Internacional Franco-Argentino de Ciencias de la Informaci\'on y de Sistemas (CIFASIS),
CONICET-UNR, Ocampo y Esmeralda, 
2000 Rosario, Argentina. 
{\texttt{\{vallarella,haimovich\}@cifasis-conicet.gov.ar}}}

\begin{abstract}
{\color{black}
This work deals with the stability analysis of nonlinear sampled-data systems under nonuniform
sampling. It establishes novel relationships between the stability property of the exact discrete-time
model for a given sequence of (aperiodic) sampling instants and the stability property of the continuous-time system when the maximum admissible sampling period converges to zero. These results can be used to infer stability properties for the sampled-data system by direct inspection of the stability of
the mentioned continuous-time system, a task which is typically easier than the analysis of the closed-loop sampled-data system. Compared to the literature, our results allow to prove stronger (asymptotic) sampled-data stability properties for nonlinear systems in cases for which existing results only guarantee practical stability.

}
\end{abstract}

\begin{keyword}
   sampled\-/data systems \sep
   nonlinear systems \sep
   nonuniform sampling \sep
   control redesign \sep
   discrete-time models.
\end{keyword}
\end{frontmatter}

\section{Introduction}
\label{sec:introduction}

{\color{black}
The two main
approaches to design controllers for sampled-data nonlinear systems are:
a) to design a discrete-time (DT) controller based
on a DT model of the plant 
\citep{NesicSCL99,NesicTAC04,liu2008input,nevsic2009stability,ustunturk2012output,ustunturk2013backstepping,noroozi2018integral,beikzadeh2018robust,IEEETAC18}
or b) to obtain the controller by adequate discretization of a continuous-time (CT) one. 
In the first approach, the DT model is usually an approximation
due to the impossibility of solving nonlinear differential equations in closed form.
In the second approach, emulation of the CT controller 
is known to stabilize  the sampled\-/data system
under sufficiently fast sampling 
\citep{nesic2009explicit,proskurnikov2020does}. 
In addition, controller redesign 
that accounts for sampling may lead to better performance 
\citep{NESIC20051143,monaco2007advanced,GRUNE2008225,POSTOYAN20082099}.
In both approaches, the final result is in general a sampling period\-/dependent DT control law to be
implemented 
usually under zero-order hold.

Results along the first approach generally
give conditions under which a stability property 
of the approximate DT model in closed loop is enough to guarantee 
(some type of) stability of the closed-loop sampled-data system 
under sufficiently fast sampling.
In many cases, only practical (not asymptotic) stability 
is ensured or else strong conditions are imposed \cite[see][and references therein]{valcar_auto20}. %
Uniform (periodic) sampling is usually considered, but some results ensuring only practical stability also allow nonuniform (aperiodic) sampling \citep{di2019sampled,di2021exponential}. 
Works that address asymptotic stability of sampled-data systems and do not fall within any of the two approaches mentioned also exist \citep{li2018output,lin2018semiglobal,8928956,lin2021new}. \citet{li2018output} shows how asymptotic stability can be preserved via the selection of a specific sampling period. \citet{lin2018semiglobal,8928956,lin2021new} give results for
semiglobal asymptotic stabilizability under constant sampling,
where the term ``semiglobal'' involves possibly different convergence rates for different sets of initial conditions and upper bounds on the sampling period. By contrast, previous own works within the first approach \citep{IEEETAC18,vallarella2018characterization,valcar_auto20} 
address semiglobal stability properties, where ``semiglobal'' involves the same convergence rate but possibly different maximum sampling periods for every initial condition, and 
allow nonuniform sampling \citep{omran2016stability,HETEL2017309}. 
In particular, we derived conditions under which semiglobal (practical or exponential) stability
is carried over between different DT models \citep{IEEETAC18,valcar_auto20}
and we provided adequate Lyapunov-type guarantees \citep{vallarella2018characterization}.
This allows to establish the stability of the exact DT model,
i.e. the model whose state coincides with 
the state of the sampled\-/data system at sampling instants and which is not assumed to be
available.

In this context, the aim of this note is to
establish a precise correspondence
between
asymptotic (not only practical) stability properties under nonuniform sampling of the 
exact DT model 
and stability properties of a related CT system.
This CT system is just the CT open-loop system in closed-loop
with the CT limit of the control law, the latter being the value of the control action when the sampling period tends to zero.
We provide mild conditions
on the open-loop models and (sampling period\-/dependent) control laws
under which the closed-loop exact DT model exhibits some asymptotic stability property if and only if 
the related CT system exhibits the respective CT equivalent property.
These conditions complement existing results by allowing to guarantee stronger (asymptotic) stability 
properties for the sampled\-/data system in cases where previous results 
only show practical stability or apply to uniform sampling. 
The mild conditions required admit systems that are not globally Lipschitz and not necessarily input-affine, and control laws that may be not differentiable with respect to the sampling period. 

\textbf{Notation}:
$\R$, $\R_{\geq 0}$, $\N$ 
and $\N_0$ denote the reals, nonnegative reals, naturals and nonnegative integers.
Classes $\K$, $\K_\infty$ and $\KL$ of functions are defined as in \cite{khalil2002nonlinear}.
For a vector $x \in \R^n$, $|x|$ denotes its Euclidean norm.
A sequence is noted as $\{T_i\}:=\{T_i\}_{i=0}^{\infty}$. 
For any $\{T_i\} \subset \R_{\ge 0}$,
we define $\sum_{i=0}^{-1} T_i = 0$.
Given $T>0$, we define
$\Phi(T):=\{ \{T_i\} : \{T_i\} \text{ is such that }  T_i \in (0,T) \text{ for all } i\in \N_0 \}$.

}

\section{Problem statement}
\label{sec:preliminaries}
{\color{black}
Our aim is to obtain mild conditions
that preserve stability properties for sampled-data systems that
arise from nonlinear plants of the form
\begin{equation}
    \dot x =  f(x,u), 
    \label{12342} 
\end{equation} under zero-order hold,
where $x(t) \in \R^n$, $u(t) \in \R^m$ are the state and control vectors. 
A control law $u_c :\R^n \rightarrow \R^m$ may render  
\begin{gather}
  \dot x = f(x,u_c(x))=:h(x) 
  \label{lazocer2}
\end{gather}
stable as per one of the following definitions.
}
\begin{definition}
\label{def:conti}
The system \eqref{lazocer2} is said to be
\begin{enumerate}[i)]
\itemsep0em
\item \label{item:gas} Globally Asymptotically Stable (GAS) 
if there exists $\beta \in \KL$ such that for any $x_0 \in \R^n$ the solutions satisfy
$|x(t)|\leq  \beta(|x_0|,t), \medspace \forall t\geq 0$.
If additionally $\beta \in \KL$ can be chosen as 
$\beta(r,t) := K r e^{-\lambda t}$ with $K \geq 1$ and $\lambda > 0$ it is said to be Globally Exponentially Stable (GES).
\item \label{item:les} Locally Exponentially Stable (LES) 
if there exist $K\geq 1$ and $R,\lambda>0$ such that for all $|x_0|\leq R$ the solutions satisfy
$|x(t)|\leq K |x_0|e^{-\lambda t}, \medspace \forall t\geq 0$.
\item GALES if it is GAS and LES. 
\end{enumerate}
\end{definition}

We consider that the function $f$ in \eqref{12342}
and a control law $u_c$ in \eqref{lazocer2} 
fulfill the following local Lipschitzness assumptions.

\begin{assumption}
\label{ass:B1}
$f:\R^n \times \R^m \rightarrow \R^n$ 
fulfills $f(0,0)=0$ and 
for every $M,M_u\geq0$
there exists
$L=L(M,M_u)>0$ 
such that for all 
$|x|,|y|\leq M$ and $|u|,|v|\leq M_u$
we have 
$\left|f(x,u)-f(y,v)\right| \leq L(|x-y|+|u-v|)$. 
\end{assumption}

\begin{assumption}
\label{ass:B2} 
$u_c:\R^n \rightarrow \R^m$ 
fulfills $u_c(0)=0$ and 
for every $M \geq 0$
there exists
$L=L(M)>0$ such that for all $|x|,|y| \leq M$
we have 
$\left|u_c(x)-u_c(y)\right| \leq L|x-y|$.
\end{assumption}

{\color{black}
We consider sampling instants 
$t_k$, $k\in \N_0$, $t_0 = 0$ and $t_{k+1} = t_k + T_k$, 
where $T_k>0$ is the $k^{th}$
sampling period. 
The sampling periods may vary following any possible sequence as long as they are bounded
by a maximum admissible sampling interval;
we refer to this situation as Varying Sampling Rate (VSR). 
We assume that $T_k$ 
is either known or 
determined at instant $t_k$,
so that this information may be used to perform 
the current control action: 
$u_k=U(x_k,T_k)$.
This is always the case under periodic sampling (i.e. $T_k \equiv T > 0$), where the control law is designed based on prior knowledge of the sampling period.
The sampled\-/data system that arises from \eqref{12342}
in feedback with 
$u_k=U(x_k,T_k)$
under zero-order hold
is
\begin{equation}
 \dot x(t)=  f(x(t),U(x(t_k),T_k))  \text{ } \medspace \forall t \in [t_k,t_{k+1}), \medspace k\in \N_0. \label{SDS}
\end{equation}
We consider DT models of \eqref{SDS}.
These can be regarded as estimates of the value $x_{k+1} = x(t_{k+1})$,
given $x_k$ and $u_k$ at the sampling instant $t_k$,
namely $x_{k+1}= F(x_k,u_k,T_k)$. 
The \textit{exact} DT model is the one that generates the actual value that $x(t_{k+1})$ will have as the solution of \eqref{SDS}, and is denoted by $F^e$. For nonlinear plants, the exact DT model
may be unavailable due to the 
difficulty or impossibility
of solving nonlinear differential equations. 
Thus, a suitable design approach is to design the control law based on a sufficiently good \textit{approximate} DT model
of the plant, such as Runge-Kutta models \citep{StuartDynSysAndNumAn}. 
The simplest of these models, the Euler model, is given by
$F^{E}(x,u,T):=x+T f(x,u)$. 
%
For a DT model $F$ and control law $U$, we define the closed-loop DT model
$\bar F_U(x,T):=F(x,U(x,T),T)$. We may also simply denote a closed-loop DT model by $\bar F$ when the control law is not
important in the context.

To state our results we need the following 
Equilibrium\-/Preserving Consistency (EPC) property, which 
bounds the mismatch 
between any two of the previously defined DT models' solutions 
after one sampling interval.
This property
becomes equivalent to the REPC property in \cite{valcar_auto20} when no errors affect the control input. 
}

\begin{definition}
\label{def:EPC}
The DT model $\bar F^a$ 
is said to be  Equilibrium\-/Preserving Consistent (EPC)
with  $\bar F^b$ if
for each $M\geq 0$
there exist constants 
{\color{black}$K:=K(M)>0$, $T^*:=T^*(M)>0$} and a function $\rho \in \K_\infty$
such that 
\begin{equation}
\displaystyle
\left| \bar F^a (x,T)- \bar F^b(y,T) \right|    
\leq (1+KT)\left|x-y\right|+ T \rho(T)  \max\{|x|,|y|\} \label{eq:EPC}
\end{equation}
for all $|x|,|y| \leq M$ and $T\in (0,T^*)$.
The pair $(\bar F^a, \bar F^b)$ is said to be EPC if $\bar F^a$ is EPC with $\bar F^b$.
\end{definition}
{\color{black}
The EPC property is sufficient to ensure
that the following stability properties for DT models, suitable
under nonuniform sampling \citep[see][]{IEEETAC18,vallarella2018characterization,valcar_auto20},
are shared between different DT models. This fact is stated in 
Theorem~\ref{theorem:SEISS1}.
}
\begin{definition}
\label{def:sampleddef}
The system $x_{k+1}=\bar F(x_k,T_k)$ is said to be
\begin{enumerate}[i)]
\itemsep0em

\item Semiglobally Practically Stable-VSR (SPS-VSR) 
if there exists $\beta \in \KL$
such that for every $M\geq0 $ and $R>0$ 
there exists $T^\star:=T^\star(M,R)>0$
such that for all 
$k\in \N_0$, $\{T_i\}\in \Phi(T^\star)$ and $|x_0|\leq M$ the solutions satisfy
$|x_k|\leq \beta\lp|x_0|,\sum_{i=0}^{k-1}T_i\rp+R$.

\item Locally Exponentially Stable-VSR (LES-VSR) 
if there exist $K\geq 1$ and $R,T^\star,\lambda>0$ such that for all
$k\in \N_0$, $\{T_i\}\in \Phi(T^\star)$ and $|x_0|\leq R$ the solutions satisfy $|x_k|\leq K |x_0|e^{-\lambda \sum_{i=0}^{k-1}T_i}$.

\item Semiglobally (asymptotically) and Locally Exponentially Stable (SLES-VSR) if it is SPS-VSR and LES-VSR. \label{SLES-VSR}

\item Semiglobally (asymptotically) Stable-VSR (SS-VSR) 
if there exists $\beta \in \KL$
such that for every $M\geq0$ 
there exists $T^\star:=T^\star(M)>0$
such that for all 
$k\in \N_0$, $\{T_i\}\in \Phi(T^\star)$ and $|x_0|\leq M$ the solutions satisfy
   $|x_k|\leq \beta\lp|x_0|,\sum_{i=0}^{k-1}T_i\rp$.
If additionally $\beta \in \KL$ can be chosen as $\beta(r,t):=K r \exp(-\lambda t)$ with $K\geq 1$ and $\lambda>0$ it is said to be \textit{Semiglobally Exponentially Stable-VSR} (SES-VSR).
\end{enumerate}
\end{definition}

\begin{theorem}
\label{theorem:SEISS1}
Suppose that $(\bar F^a,\bar F^b)$ is EPC. Then
\begin{inparaenum}[i)]
\itemsep0em
 \item $\bar F^a$ is SPS-VSR $\Leftrightarrow$ $\bar F^b$ is SPS-VSR. \label{item:SPS12}
 \item $\bar F^a$ is LES-VSR $\Leftrightarrow$ $\bar F^b$ is LES-VSR.\label{item:LES12}
 \item $\bar F^a$ is SLES-VSR $\Leftrightarrow$ $\bar F^b$ is SLES-VSR. \label{item:SLES-VSR}
 \end{inparaenum}
\end{theorem}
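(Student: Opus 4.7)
The plan is to exploit the symmetry of~\eqref{eq:EPC} under $x\leftrightarrow y$: its right-hand side is symmetric, so $(\bar F^a,\bar F^b)$ being EPC is equivalent to $(\bar F^b,\bar F^a)$ being EPC, and each equivalence in items~(\ref{item:SPS12})--(\ref{item:SLES-VSR}) need be proved only in one direction. The backbone of the argument is a discrete Gronwall-type estimate obtained by iterating~\eqref{eq:EPC} along trajectories $x_k^a,x_k^b$ of $\bar F^a,\bar F^b$ starting from a common initial state $x_0$: whenever $\max\{|x_k^a|,|x_k^b|\}\le M$ and $T_k\le T^\star$ on the index range of interest,
\[
|x_k^a-x_k^b|\;\le\; M\,\rho(T^\star)\,t_k\,e^{K(M)\,t_k},\qquad t_k\;=\;\sum_{i=0}^{k-1}T_i,
\]
which is made arbitrarily small on any prescribed finite horizon by shrinking $T^\star$. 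The central difficulty is the factor $e^{K(M)t_k}$: the bound blows up as $t_k\to\infty$, so the proof must be organised into short ``epochs'' on which the stability of $\bar F^b$ produces a strict contraction that dominates the growth of the error.

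For part~(\ref{item:SPS12}), I assume $\bar F^b$ is SPS-VSR with $\KL$-bound $\beta$. Given $M\ge 0$ and $R>0$, I use the $\KL$-property to pick an epoch length $\tau$ with $\beta(M,\tau)\le R/4$ and invoke SPS-VSR of $\bar F^b$ with parameters $(M,R/4)$. A bootstrap argument confines both trajectories to a bounded region for $t_k\in[0,\tau]$, and making $T^\star$ small enough in the Gronwall estimate forces $|x_k^a-x_k^b|\le R/4$ on this interval; consequently $|x_k^a|\le\beta(|x_0|,t_k)+R$ on $[0,\tau]$ and $|x_k^a|\le 3R/4$ near $t_k=\tau$. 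Applying the same reasoning recursively to the restart with $\tilde x_0$ equal to the value of $x_k^a$ at $t_k\approx\tau$ (hence $|\tilde x_0|<R$) and SPS-VSR of $\bar F^b$ with parameters $(R,R/4)$ keeps $|x_k^a|$ uniformly bounded thereafter; assembling the per-epoch estimates produces the desired SPS-VSR bound for $\bar F^a$ (possibly with a redefined $\KL$-function).

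For part~(\ref{item:LES12}), I assume $\bar F^b$ is LES-VSR with constants $K_b,\lambda_b,R_b$. I pick $\tau>0$ with $K_b e^{-\lambda_b\tau}\le 1/4$ and take $|x_0|\le R_b$. Over $[0,\tau]$, $|x_k^b|\le K_b|x_0|$; bootstrapping the Gronwall estimate with $M=2K_bR_b$ yields $|x_k^a-x_k^b|\le C\rho(T^\star)|x_0|$, with $C$ depending only on $\tau,K_b,R_b$. Choosing $T^\star$ so small that $C\rho(T^\star)\le 1/4$ gives $|x_k^a|\le(K_b+\tfrac14)|x_0|$ throughout the epoch (confirming the bootstrap) and $|x_k^a|\le|x_0|/2$ at its end. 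Iterating yields $|x_k^a|\le 2^{-n}|x_0|$ whenever $t_k\ge n\tau$, which is equivalent to LES-VSR of $\bar F^a$ with decay rate $(\ln 2)/\tau$ and overshoot~$2$.

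Item~(\ref{item:SLES-VSR}) is then immediate, since by definition SLES-VSR is the conjunction of SPS-VSR and LES-VSR and both are preserved by the previous two parts. The main obstacle throughout is the exponential growth in $t_k$ of the Gronwall bound, which makes a direct infinite-horizon estimate useless and forces the epoch/restart construction above; further care is needed in choosing the EPC constants with respect to a single fixed ball large enough to contain all bootstrap trajectories across epochs, and in assembling the per-epoch bounds into one $\KL$ (respectively exponential) estimate compatible with the VSR definitions.
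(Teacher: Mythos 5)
Your proof is a genuinely different route from the paper's. The paper does not argue from scratch: item~(\ref{item:SPS12}) is proved by showing that EPC implies the MSEC property of \citet[Theorem 3.1]{IEEETAC18} in the absence of errors and invoking that result, item~(\ref{item:LES12}) by showing that EPC implies the REPC/REPMC conditions of \citet[Lemmas 3.4 and 3.7, Theorem 3.5]{valcar_auto20} and tracing through that proof with the error input set to zero, and item~(\ref{item:SLES-VSR}) by combining the first two. You instead give a direct, self-contained argument: the symmetry of the right-hand side of \eqref{eq:EPC} (which is correct, and correctly halves the work), a discrete Gronwall iteration of \eqref{eq:EPC} along paired trajectories, and an epoch/restart construction to defeat the $e^{Kt_k}$ growth. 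This is essentially a reconstruction of what the cited results do internally, and it has the virtue of not depending on the MSEC/REPC machinery; the paper's route buys brevity and reuse of published lemmas. Your part~(\ref{item:LES12}) is sound as sketched (the overshoot constant comes out as roughly $2(K_b+\tfrac14)$ rather than $2$, but that is immaterial), and part~(\ref{item:SLES-VSR}) is indeed immediate from the definitions.

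There is, however, one concrete soft spot in part~(\ref{item:SPS12}) as you have written it: the restart target $3R/4$ does not close the induction. SPS\-/VSR requires $|x_k^a|\le\beta'(|x_0|,t_k)+R$ with $\beta'(|x_0|,t_k)\to 0$, so after the first epoch you must keep $|x_k^a|\le R$ \emph{for all} subsequent $k$, including the transients inside later epochs. Restarting $\bar F^b$ from $\tilde x_0$ with $|\tilde x_0|\le 3R/4$ only gives the in-epoch bound $|x_k^b|\le\beta(3R/4,\cdot)+R/4\le\alpha(3R/4)+R/4$ with $\alpha(s):=\beta(s,0)$, and $\alpha(3R/4)$ can greatly exceed $R$; adding the Gronwall error makes this worse. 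The fix is to shrink the end-of-epoch target: choose $c\le\alpha^{-1}(R/2)$, pick $\tau$ with $\beta(M,\tau)\le c/2$, and force the per-epoch Gronwall error below $\min\{c/2,R/2\}$; then each epoch ends with $|x^a|\le c$, the next epoch's transient stays below $\alpha(c)+R/2\le R$, and the induction closes with the original $\beta$ serving as the $\KL$ bound for $\bar F^a$. With that adjustment (and taking the EPC constants on a single ball of radius $\alpha(M)+R$ containing all bootstrap trajectories, as you anticipate), your argument goes through.
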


{\color{black}
The proof of Theorem \ref{theorem:SEISS1} follows from the proofs of \citet[Theorem~3.5, Lemma 3.7]{valcar_auto20} and \citet[Theorem 3.1]{IEEETAC18}, imposing that no errors affect the control input
and can be found in Section~\ref{sec:proofs}. 
%
To establish that an approximate model $F^a$ in closed loop with a control law $U(x,T)$ is such that $\bar F^a_U$ is EPC with the exact model $\bar F^e_U$, one could prove EPC of $(\bar F^{E}_U, \bar F^a_U)$, with $\bar F^E$ the Euler model, which is much simpler. This is all that is required because EPC is a transitive property and $(\bar F^e_U,\bar F^{E}_U)$
is already known to be EPC \citep{valcar_auto20}. Once $(\bar F^a_U, \bar F^e_U)$ EPC is established, application of Theorem~\ref{theorem:SEISS1} would ensure that a stability property of the approximate model also holds for the exact model.
}

\section{Main results}
\label{sec:mainresults}

\sloppy {\color{black} In this section,
we present mild sufficient conditions
under which sampled-data stability properties as per Definition \ref{def:sampleddef} hold if and only if
the closed-loop CT system \eqref{lazocer2}, with $u_c$ equal to a specific limit of the DT control law, 
has a corresponding stability property.
Conditions to ensure that different open-loop models (say $F^a$ and $F^b$) in closed\-/loop with the same control law (say $U$), are EPC, i.e. $(\bar F^a_U,\bar F^b_U)$ is EPC, already exist \citep{valcar_auto20}. 
To derive our main results we need an extension to the case where different control laws may be used, for which 
we require the following consistency and regularity conditions.}
\begin{definition}
\label{d2}
The pair $(U,V)$ is said to be Semiglobally small-time convergent Consistent (StC) if
for each $M\geq0$ there exist a function $\rho \in \K_\infty$ 
and
$T^*:=T^*(M)>0$ such that
for all $|x|\leq M$ and $T\in[0,T^*)$
we have
\begin{equation}
|U(x,T)-V(x,T) |\leq \rho(T) |x|.  \label{eq:StC}
\end{equation}
\end{definition}

\begin{definition}
\label{d1}
The function $U$ is said to be
Semiglobally small-time Lipschitz (StL)
if 
for each $M\geq0$ there exist $K:=K(M)>0$, $T^*:=T^*(M)>0$
with $T^*(\cdot)$ nonincreasing
such that 
for all $|x|,|y|\leq M$ 
and $T\in[0,T^*)$ we have $U(0,T)=0$ and
\begin{equation}
|U(x,T)-U(y,T) |\leq K|x-y|. \label{eq:StL}  
\end{equation}
\end{definition}

\begin{definition}
\label{def:24}
The DT model $F^a$ is said to be 
Semiglobally small-time Lipschitz Consistent (StLC)
if
for each $M,E\geq0$  there exist $K:=K(M,E)>0$, $T^*:=T^*(M,E)>0$ such that 
for all  $|x|,|y|\leq M$, $|u|,|v|\leq E$, and $T\in[0,T^*)$
\begin{align}
| F^a(x,u,T)-  F^a(y,v,T)|\leq (1+KT) |x-y|+ KT |u-v|.  \label{eq:StLC}
\end{align}
\end{definition}

{\color{black}
Theorem \ref{suficientes1} 
gives sufficient conditions 
so that closed-loop models arising from feeding back the same open-loop model with different control laws are EPC. 
Theorems \ref{theorem:SEISS1} and \ref{suficientes1} are both needed to prove
Theorem \ref{TH6}, which is our main result.
The proofs of Theorems~\ref{suficientes1} and~\ref{TH6} are provided in Section~\ref{sec:proofs}.
}
\begin{theorem}[Proof in Section \ref{app:Transitividad}]
\label{suficientes1}
Suppose that 
\begin{inparaenum}[i)]
\itemsep0em
\item $F^a$ is StLC,
\item $U$ is StL,
\item $(U,V)$ is StC.
\end{inparaenum}
Then $(\bar F^a_U,\bar F^a_V)$ is EPC.
\end{theorem}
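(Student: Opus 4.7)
The plan is a direct chain of estimates using each hypothesis once, plus bookkeeping to make sure all the constants and thresholds are consistent. The key observation is that $\bar F^a_U(x,T)-\bar F^a_V(y,T) = F^a(x,U(x,T),T)-F^a(y,V(y,T),T)$, so if I can control both the state mismatch (which is $|x-y|$) and the input mismatch $|U(x,T)-V(y,T)|$ via StLC, and then decompose the input mismatch using StL and StC, the EPC inequality will fall out.

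First I would fix $M\ge 0$. To invoke StLC of $F^a$ I need a bound $E$ on the control inputs that appear. From StL applied at level $M$, I obtain $K_U=K_U(M)>0$ and $T^*_U>0$ with $U(0,T)=0$, hence $|U(x,T)|\le K_U |x|\le K_U M$ for all $|x|\le M$, $T\in[0,T^*_U)$. From StC at level $M$, I obtain $\rho\in\K_\infty$ and $T^*_C>0$ such that $|U(y,T)-V(y,T)|\le\rho(T)|y|$; in particular $|V(y,T)|\le |U(y,T)|+\rho(T)|y|\le (K_U+\rho(T^*_C))M$ for $|y|\le M$, $T\in[0,\min\{T^*_U,T^*_C\})$. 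So I can choose $E:=(K_U+\rho(T^*_C))M$ and then StLC at levels $M,E$ yields $K_F=K_F(M,E)>0$ and $T^*_F>0$.

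Taking $T^*:=\min\{T^*_U,T^*_C,T^*_F\}$, for $|x|,|y|\le M$ and $T\in[0,T^*)$ the StLC inequality gives
\begin{equation*}
|\bar F^a_U(x,T)-\bar F^a_V(y,T)|\le (1+K_F T)|x-y|+K_F T\,|U(x,T)-V(y,T)|.
\end{equation*}
Splitting the input mismatch via the triangle inequality, StL, and StC yields
\begin{equation*}
|U(x,T)-V(y,T)|\le |U(x,T)-U(y,T)|+|U(y,T)-V(y,T)|\le K_U|x-y|+\rho(T)|y|.
\end{equation*}
Substituting and collecting terms gives $|\bar F^a_U(x,T)-\bar F^a_V(y,T)|\le (1+K_F(1+K_U)T)|x-y|+T(K_F\rho(T))|y|$, which matches \eqref{eq:EPC} with $K:=K_F(1+K_U)$ and $\tilde\rho(T):=K_F\rho(T)\in\K_\infty$ (since $K_F>0$), upon bounding $|y|\le\max\{|x|,|y|\}$.

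There is no genuine obstacle here; the proof is a routine triangle-inequality argument. The only thing that requires care is the order in which the constants are chosen: the StLC constant $K_F$ depends on both $M$ and an a priori bound $E$ on the control values, and $E$ itself is obtained from StL and StC applied at level $M$. Once $E$ is fixed before invoking StLC, everything composes cleanly, and the final $T^*$ is simply the minimum of the three thresholds.
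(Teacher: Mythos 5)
Your proof is correct and follows essentially the same route as the paper's: bound the control values via StL and StC to fix the level $E$ before invoking StLC, then apply the StLC inequality and split the input mismatch $|U(x,T)-V(y,T)|$ by the triangle inequality into an StL term and an StC term, yielding the EPC bound with $K=K_F(1+K_U)$ and $\rho$-term $K_F\rho(T)$. No differences worth noting.
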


\begin{theorem}[Proof in Section \ref{app:theorem7}]
\label{TH6}
Consider system \eqref{12342}, a DT control law $U(x,T)$, and the CT control law $u_c(x) := U(x,0)$.
Suppose that
\begin{enumerate}[i)]
\itemsep0em
    \item Assumptions \ref{ass:B1} and \ref{ass:B2} hold.
    \item $U$ is StL.
   \item $(U,U_c)$ is StC, with $U_c(x,T):=u_c(x)$ for all $T\geq 0$.
\end{enumerate}
Then the CT closed\-/loop plant \eqref{lazocer2} is
\begin{enumerate}[a)]
\itemsep0em
    \item \phantom{GA}LES $\Leftrightarrow$ the exact model $\bar F^e_{U}$ is LES-VSR.\label{item:2_H}
    \item GALES $\Leftrightarrow$ the exact model $\bar F^e_{U}$ is SLES-VSR. \label{item:123}
    \item \phantom{AL}GES $\Leftrightarrow$ the exact model $\bar F^e_{U}$ is SES-VSR.  \label{item:1_H}
\end{enumerate}
\end{theorem}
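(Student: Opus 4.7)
The plan is to bridge the closed\-/loop CT system \eqref{lazocer2} and the DT model $\bar F^e_U$ through two intermediate DT models linked by EPC: first I would compare $\bar F^e_U$ with $\bar F^e_{U_c}$ (the exact DT model driven by the CT control), and then $\bar F^e_{U_c}$ with $H^e$ (the semigroup of the CT closed\-/loop plant). Because the CT system is autonomous, $H^e(x_0, t_k) = x(t_k;x_0)$, so stability of $H^e$ under VSR coincides with the corresponding CT notion: any $t\geq 0$ can be realized as a sampling instant $t_k$ by choosing $\{T_i\}\in\Phi(T^\star)$ appropriately, so GAS, LES, GALES and GES of \eqref{lazocer2} are equivalent respectively to SS-VSR, LES-VSR, SLES-VSR and SES-VSR of $H^e$ (and in the GES/SES-VSR case the constants $K,\lambda$ transfer untouched because they are independent of $M$).

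The first EPC link is obtained from Theorem \ref{suficientes1} applied with $F^a = F^e$. I would first verify that $F^e$ is StLC: starting from the integral equation $F^e(x,u,T) = x + \int_0^T f(F^e(x,u,s),u)\,ds$, Assumption \ref{ass:B1} together with a Gr\"onwall estimate gives $|F^e(x,u,T) - F^e(y,v,T)| \leq e^{LT}(|x-y| + TL|u-v|) \leq (1+KT)|x-y| + KT|u-v|$ for $T$ small. Since $U_c(x,T) = u_c(x)$ is independent of $T$, Assumption \ref{ass:B2} yields StL of $U_c$ directly; the remaining hypotheses give StL of $U$ and StC of $(U_c,U)$, so Theorem \ref{suficientes1} produces EPC of $(\bar F^e_U, \bar F^e_{U_c})$. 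The second EPC link, between $\bar F^e_{U_c}$ and $H^e$, is a direct Gr\"onwall estimate on the difference of the two integral equations: using local Lipschitzness of $f$ and $u_c$ together with the elementary bound $|H^e(y,s) - y| = O(s)|y|$ for small $s$, one obtains $|\bar F^e_{U_c}(x,T) - H^e(y,T)| \leq (1+K'T)|x-y| + T\rho(T)\max\{|x|,|y|\}$ with $\rho(T) = O(T)$, which is exactly the EPC inequality.

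Chaining the two EPC links through Theorem \ref{theorem:SEISS1} (items \ref{item:LES12} and \ref{item:SLES-VSR}) transfers LES-VSR and SLES-VSR between $\bar F^e_U$ and $H^e$, which, combined with the $H^e$/CT correspondence above, proves parts \ref{item:2_H} and \ref{item:123}. Part \ref{item:1_H}, the equivalence between GES and SES-VSR, is the main obstacle because Theorem \ref{theorem:SEISS1} does not preserve SES-VSR through EPC. I would handle it with a Lyapunov\-/style argument in each direction. For the forward direction, GES of \eqref{lazocer2} yields (via the standard converse Lyapunov theorem for GES) a smooth quadratic\-/like Lyapunov function $V$ satisfying $V(H^e(x,T)) \leq V(x)e^{-c T}$; the one\-/step chained\-/EPC estimate $|\bar F^e_U(x,T) - H^e(x,T)| \leq T\rho(T)|x|$ (set $y=x$ in the EPC inequality) then upgrades this to $V(\bar F^e_U(x,T)) \leq (1 - cT/2)V(x)$ for $T^\star$ small enough, which iterates to an exponential bound on $V(x_k)$ and hence to SES-VSR of $\bar F^e_U$. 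The reverse direction proceeds symmetrically via a discrete converse Lyapunov construction for SES-VSR, giving a function $W$ that decreases along $\bar F^e_U$; the same EPC one\-/step bound then shows $W$ also decreases along $H^e$, so $H^e$ is SES-VSR and, by the correspondence above, \eqref{lazocer2} is GES.
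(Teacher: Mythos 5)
Your overall architecture coincides with the paper's: establish that $(H^e,\bar F^e_U)$ is EPC by chaining intermediate models, transfer the DT stability properties across the EPC pair, and close the loop with the correspondence between stability of \eqref{lazocer2} and VSR-stability of $H^e$ (the paper's Lemma \ref{gales->SPSandLES} and Proposition \ref{TH1}). You differ in two places. First, your EPC chain is $H^e \leftrightarrow \bar F^e_{U_c} \leftrightarrow \bar F^e_U$, staying with the exact open-loop model throughout; this requires proving that $F^e$ is StLC and comparing $\bar F^e_{U_c}$ (ZOH with frozen input $u_c(x)$) against the true closed-loop flow $H^e$, both via Gr\"onwall, and it works --- indeed the $|y-H^e(y,s)|=O(s)|y|$ estimate is exactly what produces the extra factor of $T$ in the $T\rho(T)$ term. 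The paper instead routes through the Euler model, $H^e \leftrightarrow \bar F^E_{U_c} \leftrightarrow \bar F^E_U \leftrightarrow \bar F^e_U$ (Lemma \ref{lem:impli}), which has one more link but lets it reuse an existing result for the Euler/exact comparison; your two-link chain is the more economical of the two and equally valid.

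Second, for part \ref{item:1_H}) you correctly observe that Theorem \ref{theorem:SEISS1} as stated only covers SPS-VSR, LES-VSR and SLES-VSR, not SES-VSR --- a point the paper's own proof glosses over by citing Theorem \ref{theorem:SEISS1} for SES-VSR anyway. Your forward Lyapunov perturbation argument (converse Lyapunov function for GES plus the one-step mismatch $T\rho(T)|x|$, run semiglobally on each ball $|x|\le M$) is sound. The reverse direction is the one under-justified step: the existence of a \emph{Lipschitz} converse Lyapunov function $W$ for SES-VSR under arbitrary sampling sequences is not free --- a sup over all $k$ and all $\{T_i\}$ of $|x_k|e^{\lambda t_k/2}$ need not be Lipschitz with a uniform constant, since the per-step Lipschitz factors $(1+KT_i)$ compound to $e^{Kt_k}$; you would need a finite-horizon construction (truncating at a time $\mathcal{T}$ with $Ke^{-\lambda\mathcal{T}}<1$) or, more simply, the trajectory-based transfer argument used in the proof of Theorem \ref{theorem:SEISS1}\,\ref{item:LES12}), whose output constants $K_b,\lambda_b$ do not depend on $M$ and therefore yield the semiglobal exponential bound directly when run for each $M$. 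Either fix closes the gap; as written, the reverse implication of part \ref{item:1_H}) is a sketch rather than a proof.
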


{\color{black}
Previous results ensure that 
semiglobal practical \citep{NesicSCL99,IEEETAC18}
or semiglobal exponential \citep{valcar_auto20} stability
exhibited by an approximate open\-/loop model $F^a$
in closed\-/loop with a control law $U(x,T)$ is
carried over to the exact closed\-/loop model $\bar F^e_U$.
By Theorem~\ref{TH6} it may be possible to establish even
stronger stability properties of the same exact model $\bar F^e_U$ by analyzing the stability of the CT system \eqref{lazocer2}
in closed\-/loop with the CT limit of $U$, i.e. $u_c(x):=U(x,0)$.
For example, it is known that \eqref{lazocer2} is LES if and only if its linearization 
at the equilibrium is GES. If $\bar F^e_U$ is already known to be SPS-VSR,
and \eqref{lazocer2} is LES, then according to Theorem~\ref{TH6} 
$\bar F^e_U$ is LES-VSR and hence indeed SLES-VSR.
This was not covered in previous results and can be done
without an explicit expression of the exact model.
We emphasize (Theorem~\ref{theorem:SEISS1}) that $(\bar F^a, \bar F^e)$ being EPC does not guarantee that if $\bar F^a$ is SS-VSR then $\bar F^e$ also is, unless stability is locally exponential in addition to asymptotic. 
As a side comment, note that the sampled-data control law $U_c(x,T) = u_c(x) = U(x,0)$ is just the emulation of the CT law $u_c(x)$, i.e. application of the control action $u_c(x)$ irrespective of the sampling period.

Additionally, Theorem~\ref{TH6}
allows to prove stability properties of the sampled-data setting for a broader family of control laws
than the ones used in some controller redesign approaches \citep{monaco2007advanced,GRUNE2008225}.
In particular, note that the StC property in Definition~\ref{d2} allows $U$ to lack a
polynomial expansion of the form $U(x,T)=\sum_{i=0}^N T^i u_i(x)$ with $u_c(x):=u_0(x)$.
For example, consider
$U(x,T):=U_c(x,T)-\sqrt{T}x$ which is not differentiable at $T=0$ but for which
$(U,U_c)$ is StC with $\rho(T):=\sqrt{T}$. 
}

\section{Example}
\label{sec:example}

In the present example, 
we use the provided results to prove the SLES-VSR of a nonlinear sampled\-/data system
in closed\-/loop with a proposed sampling period-dependent DT control law.
Consider the following version of the nonlinear CT plant 
of the form $\dot x= f(x,u)= \tilde f(x)+g(x)u$  
presented  in \cite{nesic2009explicit}
\begin{align}
\begin{bmatrix}
\dot x_1 \\
\dot x_2 
\end{bmatrix}=
\begin{bmatrix}
-2-x_1^2 & 1 \\
0 & x_2(1-x_2) 
\end{bmatrix}  
\begin{bmatrix}
 x_1 \\
 x_2 
\end{bmatrix}+
\begin{bmatrix}
 0 \\
 1 
\end{bmatrix}u. \label{continuo}
\end{align}
Consider 
the controller $u_c(x)=-2x_2$, where $x:=[x_1, x_2]^T$.
First, we will prove that the resulting closed\-/loop plant 
\begin{equation}
\dot x = 
\begin{bmatrix}
-2x_1-x_1^3+x_2 \\
-2x_2+x_2^2(1-x_2)
\end{bmatrix}=\tilde f(x)+g(x)u_c(x):= h(x) \label{este}
\end{equation}
is GALES.
Define the Lyapunov function 
$V(x):={\frac {1}{2}} (x_1^2+x_2^2)$
and note that
\begin{align}
    \dot V(x)&= \frac{\partial V}{\partial x} h(x) 
    =-x_1^4-1.5x_1^2-\frac{1}{2}(x_1-x_2)^2-1.5x^2_2+x_2^3-x_2^4 \es
    &\leq -x_1^4-1.5x_1^2-\frac{1}{2}(x_1-x_2)^2, \label{lyapu}
\end{align}
thus \eqref{este} is GAS.
Additionally, we have that
\begin{equation}
\frac{\partial h(x)}{\partial x} \bigg|_{x=0}= \begin{bmatrix}
-2 & 1 \\ 
0 & -2
\end{bmatrix} \label{matrix1}
\end{equation}
is Hurwitz. By \cite[Corollary 4.3]{khalil2002nonlinear} \eqref{este} is LES, and consequently GALES. 

Next, we will derive a stabilizing sampling-period dependent DT control law. 
Let $H^e(x,T)$ denote the solution of \eqref{este} from initial state $x$, evaluated $T$ time units after. In other words, $H^e$ denotes the exact DT model of $\dot x = h(x)$. 
Additionally, according to Section~\ref{sec:preliminaries}, $F^e(x,u,T)$ denotes the solution of \eqref{continuo} from initial state $x$ and under a constant input $u$ (zero-order hold), evaluated $T$ time units after.
If a value of $u$ exists so that the matching equation $F^e(x,u,T)=H^e(x,T)$ is satisfied, then the solution of the sampled-data system would be equal to that of the desired closed-loop CT system \eqref{este} at a sampling instant.
For analytic and complete vector fields $\tilde f$ and $g$, the matching equation $F^e(x,u,T)=H^e(x,T)$
is solvable if and only if there exists a smooth function $\alpha: \R^n \rightarrow \R$ such that
$\frac{\partial g(x)}{\partial x}\tilde f(x)-\frac{\partial \tilde f(x)}{\partial x}g(x)=\alpha(x) g(x)$ 
\citep[Theorem 3.2]{monaco2007advanced}. 
This is not the case for the present plant since we have
\begin{equation}
\frac{\partial g(x)}{\partial x}\tilde f(x)-\frac{\partial \tilde f(x)}{\partial x}g(x)=\begin{bmatrix}
 -1 \\
 2x_2(x_2 - 1) + x_2^2 
\end{bmatrix} \neq \begin{bmatrix}
 0 \\
 \alpha(x) 
\end{bmatrix}. \label{esto4}
\end{equation}
Additionally,
no sampled\-/data control law can satisfy $|F^e(x,u,T)-H^e(x,T)|\leq C T^N$ for order $N\geq4$ for any $T>0$ sufficiently small and $C>0$ \citep{GRUNE2008225}. 

Since the matching equation is not solvable, we will propose
an approximate matching equation based on the Heun model of \eqref{este}
\begin{equation}
H^{Heun}(x,T):=x+\frac{T}{2}\lp h(x)+ h(x+T h(x)) \rp \label{videolla}
\end{equation}
to derive the control law.
If $H^e$ is approximated by $H^{Heun}$ 
and $F^e$ by $F^E$ we obtain
\begin{align}
 F^{E}(x,u,T)&= H^{Heun}(x,T) \es
 x+T[ \tilde f(x)+g(x)u]&=  H^{Heun}(x,T)\es
g(x)u&=\frac{ H^{Heun}(x,T)-x}{T} - \tilde f(x). \label{40} 
\end{align}

%
%
%

%
\sloppy Left-multiplying both sides of \eqref{40} by $g^\dagger(x):=(g^T(x)g(x))^{-1}g^T(x)$ and operating we have
       $U^{E/Heun}(x,T) :=g^\dagger(x) \left[\frac{ H^{Heun}(x,T)-x}{T} - \tilde f(x)\right ]$.
Replacing \eqref{videolla} into the last expression 
and solving yields
\begin{align}
&U^{E/Heun}(x,T)=- 2x_2\es
&+ \lp 1.5x_2^5 - 2.5x_2^4 + 5x_2^3 - 3x_2^2 + 2x_2\rp T \es
&+ \lp- 1.5x_2^7 + 3.5x_2^6 - 8.5x_2^5 + 8.5x_2^4 - 8x_2^3 + 2x_2^2\rp T^2  \es
&+\lp 0.5x_2^9 - 1.5x_2^8 + 4.5x_2^7 - 6.5x_2^6 + 9x_2^5 - 6x_2^4 + 4x_2^3 \rp T^3. 
\end{align}
We next simulate the sampled\-/data system defined by \eqref{continuo} 
for two different controllers: emulation given by
$U_c(x,T)=u_c(x)=-2x_2$ and
$U^{E/Heun}$.
Note that $U^{E/Heun}(x,0)=u_c(x)$, as expected.

Assumptions \ref{ass:B1} and \ref{ass:B2} 
are easy to verify.
Note that
$f(x,u)= \tilde f(x)+g(x)u$ 
fulfills Assumption \ref{ass:B1} if and only if  $\tilde f$ and $g$
are locally Lipschitz.
It is easy to prove that the pair $(U_c,U^{E/Heun})$ is StC
and that $U_c$ and $U^{E/Heun}$ are both StL. 
%
Since $\dot x=h(x)$ is GALES, Theorem \ref{TH6} establishes that both
$\bar F^e_{U_c}$ and $\bar F^e_{U^{E/Heun}}$ are SLES-VSR.

Figure~\ref{figure1} shows simulations from initial condition $x_0=[-1,1]^T$ for both sampled\-/data closed\-/loop models
for a constant sampling period $T=0.75$. 
%
Note that for the used sampling instant sequence, emulation leads to unstable behaviour while the sampled\-/data evolution corresponding to the controller $U^{E/Heun}$ is stable and gets closer to the exact continuous\-/time solution $H^e$. 

\section{Conclusions}
\label{sec:conclusions}
We presented mild consistency and regularity conditions on the plant and control laws
that allow to establish novel relationships between the stability of a sampled\-/data system fed back with sampling-period-dependent
control laws with the stability of the CT closed\-/loop plant obtained in the limit as the sampling period converges to zero. The given results extend previous ones under milder assumptions.

\section{Proofs} 
\label{sec:proofs}

{\color{black}
\subsection{ Proof of Theorem \ref{theorem:SEISS1}}
\label{section:proof:SEISS1}
\indent\emph{i).} 
We will use the results in \cite{IEEETAC18}. 
To do so
define $G^a(x,u,T):= \bar F^a(x,T)$ and $G^b(x,u,T):= \bar F^b(x,T)$ for any $u$. 
We will now prove that the fact that $(\bar F^a, \bar F^b)$ is EPC implies 
that $(U,G^a)$ is MSEC with $(U,G^b)$ for any control law $U(x,T)$ according to \citep[Definition 2.6]{IEEETAC18}.

%

Given $\X \subset  \R^n$ compact, define 
$M:=\max \{|x|:x \in \X\}$. Let the EPC 
definition generate $K>0$, $T^*>0$ and $\rho \in \K_\infty$ 
such that 
\begin{align}
&\left|\bar F^a (x,T)- \bar F^b(y,T) \right|    \leq (1+KT)\left|x-y\right|+ T \rho(T)  \max\{|x|,|y|\}
\end{align}
for all $|x|,|y| \leq M$ and $T\in (0,T^*)$.
Define $\rho_0 \in \K$ via $\rho_0 :=M\rho$ 
and the constant (hence nondecreasing) function $\sigma : \R_{\ge 0} \to \R_{\ge 0}$, $\sigma\equiv K$. 
Then, 
\begin{align*}
    |\bar F^a(x,T)-\bar F^b(y,T)| &= |G^a(x,u,T)-G^b(y,u,T)|\\
    &\leq T \rho_0(T) + (1 + T \sigma (T ))|x - y|
\end{align*}
holds for all $u$ and $x,y\in \X$. In particular, for any control law $U$, if we substitute $u = U(x+e,T)$, it follows that
\begin{align*}
    |\bar G^a_U(x,e,T) - \bar G^b_U(y,e,T)| &:=
    |G^a(x,U(x+e,T),T)-G^b(y,U(y+e,T),T)|\\
    &\leq T \rho_0(T) + (1 + T \sigma (T ))|x - y|,
\end{align*}
where we have employed the notation $\bar G^a_U(x,e,T)$ according to \citet{IEEETAC18}.
Therefore, the pair $(U,G^a)$ is MSEC with $(U,G^b)$ \citep[Definition 2.6]{IEEETAC18}.
%
%
%
By \citep[Theorem 3.1]{IEEETAC18} this last fact
is sufficient to ensure that 
if $x_{k+1}= \bar G^a_U(x_k,e_k,T_k)$ is Semiglobally Practically Input-to-State Stable under nonuniform sampling (SP-ISS-VSR) as defined in \citep[Definition 2.1]{IEEETAC18} then so is 
$x_{k+1}= \bar G^b_U(x_k,e_k,T_k)$ and viceversa. 
Note that the property SP-ISS-VSR \citep[Definition 2.1]{IEEETAC18} becomes SPS-VSR (Definition~\ref{def:sampleddef}) in the absence of errors (inputs).
Given that that $\bar G^a_U(x,e,T)= \bar F^a(x,T)$ and  $\bar G^b_U(x,e,T)= \bar F^b(x,T)$ for all $e \in \R^n$ the result follows.
\mer

\indent\emph{ii).} 
\label{app:proof:LES}
We will use the results in \cite{valcar_auto20}.
To do so define $\bar G^a(x,e,T):=\bar F^a(x,T)$ and $\bar G^b(x,e,T):= \bar F^b(x,T)$
for any $e$.
Note that if the pair $(\bar F^a,\bar F^b)$ is EPC then the pair
$(\bar G^a, \bar G^b)$ is REPC as defined in \citep[Definition 3.1]{valcar_auto20}. 
Furthermore, the REPC property implies the REPMC property \citep[Lemma 3.7]{valcar_auto20},
which is required to prove \citep[Theorem 3.5]{valcar_auto20}.


The current proof is based on slight modifications of the proof of 
\citep[Theorem 3.5]{valcar_auto20}. We modify its first part
to adapt it to the LES-VSR property under consideration:

Let $K_a\geq 1$ and $R_a,T^a,\lambda_a>0$,
characterize the 
LES-VSR 
property of 
$x^a_{k+1}=\bar G^a(x^a_k,e_k,T_k)$.
Thus, 
 for all
$k\in \N_0$, $\{T_i\}\in \Phi(T^a)$ and $|x_0|\leq R_a$ the solutions satisfy
$|x^a_k|\leq K^a |x^a_0|e^{-\lambda_a \sum_{i=0}^{k-1}T_i}\leq K^a R_a=:R$.
Let $\delta \in(0,1)$ and  $\eta\in(0,\delta)$.
Let $\T:=\frac{1}{\lambda_a} \ln\lp{\frac{K_a}{\delta-\eta}} \rp$.
Define $\T_1:=\T+1$ and
let \citet[Lemma 3.4]{valcar_auto20} generate
$T^L=T^L(R,0,\T_1,\eta)$.
Define $\bar T < \min \{1, T^a,T^L\}$.
Consider sampling period sequences such that $\{T_i\} \in \Phi(\bar T)$, and
for every $k\in\N_0$ and $j\in \N$ define
$s(k):= \sup \left\{r\in \N_0 : r \ge k+1, \sum_{i=k}^{r-1} T_i \leq \T_1 \right\}$ 
and
$s^j(k):=\overbrace{s(\hdots s(s}^j(k)))$.
From this expression the proof follows identically as in
the proof of \citep[Theorem 3.5]{valcar_auto20} by performing the following minor changes:
a) the quantity $E$ must be chosen as $E=0$, 
which implies that the error input sequence satisfies $e_i=0$ for all $i\in \N_0$,
b) rename $M$ and $M_a$ by $R_a$ and $R$, respectively.
Consequently, according to the proof of \citep[Theorem 3.5]{valcar_auto20}, we obtain that the state evolution for the model $\bar G^b$ from the initial state $\xi$ satisfies the following condition
\begin{align}
|x^b_k(\xi)| &\leq K_b
\exp{\left(- \lambda_b \sum_{i=0}^{k-1} T_i\right)} |\xi|
\end{align}
for all $k\in \N_0$, $|\xi|\leq R$ and $\{T_i\}\in \Phi(\bar T)$,
where $K_b:= (K_a+\eta)\exp{\left( \lambda_b \T_1\right)}$
$=(K_a+\eta)/\delta$ and $\lambda_b:=\ln({1/\delta})/\T_1$ and the result follows. \mer


\indent\emph{iii).} 
According to definition of SLES-VSR in Definition~\ref{def:EPC},
this result is a direct consequence of the results in the previous items \ref{item:SPS12}) and
\ref{item:LES12}). \mer

\qed

}

\subsection{Proof of Theorem \ref{suficientes1}.}
\label{app:Transitividad}
Consider $M \geq 0$ given and $|x|,|y|\leq M$. 
Let the StC property of $(U,V)$ generate $T^V:=T^V(M)>0$ and $\tilde \rho \in \K_\infty$ and the StL property of $U$ generate $K_U:=K_U(M)>0$ and $T^U:=T^U(M)>0$.
Thus $|U(x,T)|\leq K_UM$ for all $|x|\leq M$ and $T\in[0,T^U)$. We can bound $|V(x,T)|=|V(x,T)-U(x,T)+U(x,T)|\leq|V(x,T)-U(x,T)|+|U(x,T)|\leq \tilde \rho(T^V)M+K_UM=:E$.
Let the StLC property of $F^a$ generate $K(M,E)>0$ and $T^*:=T^*(M,E)>0$.
Define $\bar K:=K(1+K_{U})$
and $ \rho \in \K_\infty$ via $ \rho(s):=K\tilde \rho$.
Define $T^\star:=\min \{T^*,T^V,T^U\}$
Then, for all $|x|,|y|\leq M$ and $T\in[0,T^\star)$ we have
\begin{align}
&\left| F^a (x,U(x,T),T)- F^a (y,V(y,T),T) \right|   \es
&\leq (1+KT)|x-y|+KT|U(x,T)-V(y,T)| \label{1213} \\
&= (1+KT)|x-y|+KT|U(x,T)-U(y,T) +U(y,T)-V(y,T)| \es
&\leq (1+KT)|x-y|+KK_{U}T|x-y|+K T\tilde\rho(T)|y| \label{1212} \\
&\leq (1+\bar K T)|x-y|+ T   \rho(T)\max\{|x|,|y|\} \notag 
\end{align}
In \eqref{1213} and  \eqref{1212} we have used the facts that $F^a$ is StLC, 
$U$ is StL and $(U,V)$ is StC, respectively.
\qed   

\subsection{Proof of Theorem \ref{TH6}}
To prove Theorem~\ref{TH6} we need  Theorem~\ref{suficientes1} and 
the following Lemma~\ref{gales->SPSandLES} and Proposition \ref{TH1},
whose proofs are given at the end of this section.
Lemma \ref{gales->SPSandLES} establishes the relationship between CT 
stability properties of \eqref{lazocer2} and the DT model $H^e$, given
by its samples; i.e. $H^e(x_k,T_k)$ is just the solution of \eqref{lazocer2}
a time $T_k$ ahead from initial state $x_k$. 
{\color{black}
}

\begin{lemma} 
\label{gales->SPSandLES}
The CT closed\-/loop plant \eqref{lazocer2} is
\begin{inparaenum}[i)]
\itemsep0em
\item GAS $\Leftrightarrow$ $H^e$ is SPS-VSR. \label{item:1_g}
\item LES $\Leftrightarrow$ $H^e$ is LES-VSR. \label{item:2_g}
\item GALES $\Leftrightarrow$ $H^e$ is SLES-VSR. \label{item:3_g}
\item GES $\Leftrightarrow$ $H^e$ is SES-VSR. \label{item:0_g}
\end{inparaenum}
\end{lemma}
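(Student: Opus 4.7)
The plan is to exploit the fact that $H^e$ is the exact DT model of \eqref{lazocer2}: the defining integral equation $H^e(x,T)=x+\int_0^T h(H^e(x,s))\,ds$ implies that $H^e(x,T)$ coincides with the CT solution of $\dot y=h(y)$ evaluated at time $T$ with $y(0)=x$. Consequently, the DT sequence $\{x_k\}$ generated by $x_{k+1}=H^e(x_k,T_k)$ from $x_0$ equals the CT trajectory of \eqref{lazocer2} sampled at $t_k:=\sum_{i=0}^{k-1}T_i$, namely $x_k=x(t_k)$, for any choice of sampling sequence $\{T_i\}$, provided the CT solution exists on $[0,t_k]$.

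The forward implications ($\Rightarrow$) follow in a single line each from this identity. For example, GAS yields $|x_k|=|x(t_k)|\le\beta(|x_0|,t_k)$, which is stronger than SPS-VSR (the additive $R$ is unnecessary and $T^\star$ may be taken arbitrarily). The same reasoning transfers LES, GALES, and GES to LES-VSR, SLES-VSR, and SES-VSR, respectively.

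For the reverse implications ($\Leftarrow$), the core device is that any prescribed CT time $t>0$ can be hit exactly by the uniform sampling $T_i=t/N$ as soon as $N>t/T^\star$; the $N$-th DT sample then equals $x(t)$, and the DT bound immediately gives a bound on $x(t)$. For item \ref{item:1_g}, SPS-VSR delivers $|x(t)|\le\beta(|x_0|,t)+R$ for every $R>0$, and since the $\KL$ function $\beta$ does not depend on $R$, taking the infimum over $R>0$ produces the clean $\KL$ bound required for GAS. Items \ref{item:2_g} and \ref{item:0_g} do not even require a limit: the DT exponential bound transfers directly with the same constants $K$, $\lambda$ (and, for LES, the same radius). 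Item \ref{item:3_g} then follows from \ref{item:1_g} and \ref{item:2_g}.

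The only delicate point is ensuring that $x(\cdot)$ actually exists on $[0,t]$ so that the sampling identity may be invoked for arbitrary $t$. Under the ambient Assumptions \ref{ass:B1}--\ref{ass:B2}, $h(x)=f(x,u_c(x))$ is locally Lipschitz, so local existence and uniqueness hold on some maximal interval $[0,T_{\max})$. Applying the $\Leftarrow$ argument on $[0,t']$ for every $t'<T_{\max}$ yields a uniform bound on $|x(t')|$ (e.g.\ $\beta(|x_0|,0)$ in the SPS-VSR case, or $K|x_0|$ in the SES-VSR case), which rules out finite-time blow-up and forces $T_{\max}=\infty$. This bootstrap of global existence, combined with the $R\to 0^+$ limit in item \ref{item:1_g}, is the main --- though mild --- technical obstacle; once it is handled, all four items fall out of the same template.
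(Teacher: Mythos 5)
Your proposal is correct and follows essentially the same route as the paper's proof: identify the iterates of $H^e$ with samples of the CT trajectory of \eqref{lazocer2}, so the forward directions are immediate, and recover the CT bounds in the converse directions by hitting any prescribed time $t$ exactly with uniform sampling of period $t/N$ for $N$ large enough relative to $T^\star$. The only (harmless) deviations are that you pass to the infimum over $R>0$ in the SPS-VSR case, where the paper instead fixes $R\le\beta(|x_0|,t)$ to obtain the $\KL$ bound $2\beta$, and that you make explicit the global-existence bootstrap that the paper leaves implicit.
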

Proposition~\ref{TH1} shows that
under Assumptions \ref{ass:B1} and \ref{ass:B2}, 
the checkable mild sufficient conditions of
StL for the control law $U$ 
and StC for the pair $(U,U_c)$
ensure that the pair $(H^e, \bar F^{e}_U)$ is EPC. This establishes a correspondence between CT and sampled\-/data systems via Lemma \ref{gales->SPSandLES}.

\begin{prop} 
\label{TH1}
Under the assumptions of Theorem \ref{TH6},
 the pair 
$(H^e, \bar F^{e}_U)$ is EPC.
\end{prop}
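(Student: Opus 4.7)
The plan is to prove $(H^e,\bar F^e_U)$ is EPC by inserting the intermediate closed-loop DT model $\bar F^e_{U_c}$, where $U_c(x,T):=u_c(x)$, and treating the two resulting mismatches separately: the mismatch $|\bar F^e_{U_c}-\bar F^e_U|$ is handled by invoking Theorem \ref{suficientes1}, while $|H^e-\bar F^e_{U_c}|$ is bounded directly by Gronwall estimates that quantify the cost of freezing the feedback $u_c$ at the initial state over one sampling period.

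For the first piece I would verify the three hypotheses of Theorem \ref{suficientes1} for the triple $(F^e,U_c,U)$. The StLC of $F^e$ follows from its integral representation and Assumption \ref{ass:B1}: for bounded $x,y,u,v$ the exact solutions remain in a common ball on a short interval, the integrand is bounded by $L(|F^e(x,u,s)-F^e(y,v,s)|+|u-v|)$, and Gronwall gives $|F^e(x,u,T)-F^e(y,v,T)|\leq e^{LT}(|x-y|+LT|u-v|)$, which has StLC form after expanding $e^{LT}=1+O(T)$. The StL of $U_c$ is immediate from Assumption \ref{ass:B2} and $u_c(0)=0$. Together with the StC hypothesis on $(U_c,U)$, Theorem \ref{suficientes1} yields that $(\bar F^e_{U_c},\bar F^e_U)$ is EPC.

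The central step is to show $|H^e(x,T)-\bar F^e_{U_c}(x,T)|\leq T\hat\rho(T)|x|$ for some $\hat\rho\in\Kinf$. With $\phi(t):=H^e(x,t)$ and $\psi(t):=F^e(x,u_c(x),t)$, both starting at $x$, the hypotheses $f(0,0)=u_c(0)=0$ combined with Assumptions \ref{ass:B1}--\ref{ass:B2} give $|f(z,u_c(z))|\leq C|z|$ on bounded sets, so a first Gronwall application yields $|\phi(s)-x|\leq C|x|s$ on a small interval. Splitting $f(\phi,u_c(\phi))-f(\psi,u_c(x))=[f(\phi,u_c(\phi))-f(\psi,u_c(\phi))]+[f(\psi,u_c(\phi))-f(\psi,u_c(x))]$, bounding the two pieces by $L|\phi-\psi|$ and $L^2|\phi-x|\leq L^2 C|x|s$ respectively, and applying Gronwall a second time to $\phi-\psi$ produces $|H^e(x,T)-\bar F^e_{U_c}(x,T)|\leq C'|x|T^2$ for $T\in(0,T^\star)$, which is the desired bound with $\hat\rho(s):=C's\in\Kinf$. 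Combining this with the EPC bound for $(\bar F^e_{U_c},\bar F^e_U)$ via the triangle inequality (evaluated at the common point $x$ for the first term so only the $|x|$-dependent part survives) closes the EPC estimate for $(H^e,\bar F^e_U)$ with function $\rho+\hat\rho\in\Kinf$.

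The main obstacle is precisely extracting the factor $|x|$ in this central estimate, rather than a coarser $M$-dependent constant: this requires exploiting $f(0,0)=0$ and $u_c(0)=0$ to make $z\mapsto f(z,u_c(z))$ linear-in-$|z|$ via its local Lipschitz constant, and it is what makes the $T\rho(T)\max\{|x|,|y|\}$ shape on the right of the EPC definition achievable.
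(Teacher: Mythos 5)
Your proof is correct, but it follows a genuinely different route from the paper's. The paper establishes EPC of $(H^e,\bar F^e_U)$ by chaining through the \emph{Euler} model: it shows $(H^e,\bar F^E_{U_c})$, $(\bar F^E_{U_c},\bar F^E_U)$ and $(\bar F^E_U,\bar F^e_U)$ are each EPC (items iii), vi) and vii) of Lemma \ref{lem:impli}) and concludes by transitivity; the last and hardest link, between the Euler and exact closed-loop models, is delegated to a cited consistency result for Runge--Kutta models (\citealp[Theorem~3.9]{valcar_auto20}). You instead stay entirely within exact models, chaining $H^e\to\bar F^e_{U_c}\to\bar F^e_U$: you apply Theorem \ref{suficientes1} to the exact open-loop model (which requires proving StLC of $F^e$, something the paper never does --- it only proves StLC for $F^E$ and for the closed-loop $H^e$), and you bound the remaining mismatch $|H^e(x,T)-\bar F^e_{U_c}(x,T)|$, i.e.\ the cost of freezing the feedback over one period, directly by a double Gronwall argument yielding an $O(|x|T^2)$ estimate. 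Both your key estimates check out: the StLC of $F^e$ follows from Assumption \ref{ass:B1} and Gronwall exactly as you sketch, and your splitting of $f(\phi,u_c(\phi))-f(\psi,u_c(x))$ together with the a priori bound $|\phi(s)-x|\leq C|x|s$ (which uses $f(0,0)=0$, $u_c(0)=0$ to get $|h(z)|\leq C|z|$) delivers the required $T\hat\rho(T)|x|$ shape; combining with the StL-induced bound $|\bar F^e_{U_c}(x,T)-\bar F^e_{U_c}(y,T)|\leq(1+\bar KT)|x-y|$ then gives the full two-point EPC inequality. The one detail you gloss over in the central estimate is the a priori confinement of $\phi$ and $\psi$ to a ball of radius $2M$ on $(0,T^*)$ so that fixed local Lipschitz constants apply; this is the same standard contradiction-plus-Gronwall argument the paper spells out in item ii) of Lemma \ref{lem:impli}, so it is a detail to fill in rather than a gap. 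What your route buys is self-containedness --- no appeal to the external Runge--Kutta consistency theorem --- at the price of two extra Gronwall computations; what the paper's route buys is reuse of existing machinery and a statement (EPC of $(\bar F^E_U,\bar F^e_U)$) that is independently useful for justifying Euler-based controller design.
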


{\color{black}
Note that asymptotic stability of \eqref{lazocer2} does not imply that $\bar F^e_U$ also exhibits asymptotic 
properties. Under the assumptions of Theorem \ref{TH6}, by Lemma~\ref{gales->SPSandLES}, Proposition~\ref{TH1} and Theorem~\ref{theorem:SEISS1},
the system \eqref{lazocer2} is GAS if and only if $\bar F^e_U$ is SPS-VSR (however GAS of \eqref{lazocer2} does not imply SS-VSR of $\bar F^e_U$ as one might intuitively think).
}
Next we use Theorem~\ref{suficientes1}, Lemma~\ref{gales->SPSandLES} and the following Lemma~\ref{lem:impli} to prove Proposition~\ref{TH1}. Recall that $U_c(x,T):=u_c(x)$ for all $T$.

\begin{lemma}
\label{lem:impli}
The following implications hold
\begin{enumerate}[i)]
\itemsep0em
    \item Assumptions \ref{ass:B1} and \ref{ass:B2}   $\Rightarrow$ $h$ is locally Lipschitz.  
     \label{clm1}
        \item \label{claim_1}   $h$ is locally Lipschitz  $\Rightarrow$
        $H^e$ is StLC. Moreover, for each $M \geq 0$ there exists $\bar T:=\bar T(M)$ such that for all $|x|\leq M$ and $T\in(0, \bar T)$ we have $|H^e(x,T)|\leq 2M$.
    \item
    $h$ is locally Lipschitz  $\Rightarrow$
$(H^e, \bar F^{E}_{U_c})$ is EPC.
        \label{lem:REPC0}
     \item Assumption \ref{ass:B1} 
     $\Rightarrow$ $F^{E}$ (Euler model) is StLC.    \label{lem:A2}
     \item Assumption \ref{ass:B2} 
     $\Rightarrow$ $U_c$ is StL.  \label{lem:A3}
    \item   $F^{E}$ is StLC+ $U$ is StL+ $(U_c,U)$ is StC   
 $\Rightarrow$ $(\bar F_{U_c}^{E}, \bar F^{E}_U)$ is EPC.
\label{lem:REPC1}
\item
\label{lem:REPC2}
 Assumptions \ref{ass:B1} and \ref{ass:B2} +$U$ is StL 
$\Rightarrow$
$( \bar F^{E}_U,\bar F^e_U)$ is EPC.
\end{enumerate}
\end{lemma}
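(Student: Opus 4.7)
The plan is to prove the seven items sequentially, since each later one either builds on an earlier item or invokes Theorem~\ref{suficientes1}. Items (i), (iv) and (v) are immediate. For (i), on $|x|,|y|\leq M$ Assumption~\ref{ass:B2} with $u_c(0)=0$ gives $|u_c(x)|\leq L_{u_c}(M)M=:M_u$, and then Assumption~\ref{ass:B1} yields $|h(x)-h(y)|\leq L(M,M_u)\bigl(1+L_{u_c}(M)\bigr)|x-y|$. Item (iv) follows at once from $F^{E}(x,u,T)=x+Tf(x,u)$ and local Lipschitzness of $f$, producing the StLC bound $(1+LT)|x-y|+LT|u-v|$ with $K=L(M,E)$. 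Item (v) is trivial because $U_c(x,T)=u_c(x)$ is independent of $T$ and $u_c$ is locally Lipschitz with $u_c(0)=0$.

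For item (ii), I would apply Gronwall's inequality to the integral equation $H^e(x,T)=x+\int_0^T h(H^e(x,s))\,ds$. Taking $L_h:=L_h(2M)$, since $h(0)=0$ and $h$ is Lipschitz on the ball of radius $2M$, a bootstrap/Gronwall argument gives $|H^e(x,T)|\leq |x|e^{L_hT}\leq 2|x|\leq 2M$ for $T\leq \bar T:=(\ln 2)/L_h$, validating a posteriori the use of $L_h$. The same inequality applied to $|H^e(x,T)-H^e(y,T)|$ yields $(1+KT)|x-y|$ for small $T$ (the control-input term is absent, consistent with $H^e$ corresponding to zero control). For item (iii), since $\bar F^{E}_{U_c}(y,T)=y+Th(y)$, one has
\[
H^e(x,T)-\bar F^{E}_{U_c}(y,T)=(x-y)+\int_0^T\bigl[h(H^e(x,s))-h(y)\bigr]\,ds,
\]
and bounding $|h(H^e(x,s))-h(y)|\leq L_h(|H^e(x,s)-x|+|x-y|)$ together with $|H^e(x,s)-x|\leq 2L_hs|x|$ from (ii) gives $(1+L_hT)|x-y|+L_h^2T^2|x|$, which is the EPC form with $\rho(T)=L_h^2T$.

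Item (vi) is a direct application of Theorem~\ref{suficientes1}: both the StC condition (Definition~\ref{d2}) and the EPC conclusion (Definition~\ref{def:EPC}) are symmetric in their two arguments, so the hypothesis $(u_c,U)$ StC may equivalently be read as $(U,u_c)$ StC. With $F^{E}$ StLC and $U$ StL, Theorem~\ref{suficientes1} then yields $(\bar F^{E}_{U},\bar F^{E}_{U_c})$ EPC, and symmetry of EPC delivers the claim.

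Item (vii) is the main obstacle. Writing $u_x:=U(x,T)$ and $u_y:=U(y,T)$, I would start from
\[
\bar F^{E}_U(x,T)-\bar F^e_U(y,T)=(x-y)+\int_0^T\bigl[f(x,u_x)-f(F^e(y,u_y,s),u_y)\bigr]\,ds
\]
and apply Assumption~\ref{ass:B1} to the integrand. StL of $U$ gives $|u_x|,|u_y|\leq K\max\{|x|,|y|\}$ and $|u_x-u_y|\leq K|x-y|$. A Gronwall bound on $|F^e(y,u_y,s)|\leq |y|+L\int_0^s\bigl(|F^e(y,u_y,r)|+|u_y|\bigr)\,dr$ produces $|F^e(y,u_y,s)|\leq (|y|+Ls|u_y|)e^{Ls}\leq 2(1+K)|y|$ for small $s$, whence $|F^e(y,u_y,s)-y|\leq L(2+3K)|y|\,s$. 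The delicate point --- and the reason I single this out as the main obstacle --- is that this last bound must be \emph{linear} in $|y|$, not merely bounded by the global constant $M$, so that after integration the remainder term fits into the $T\rho(T)\max\{|x|,|y|\}$ slot required by Definition~\ref{def:EPC}. StL of $U$ is precisely what delivers $|u_y|\leq K|y|$ and hence the correct scaling; collecting terms yields the EPC bound with $K_{\text{EPC}}=L(1+K)$ and $\rho(T)=\tfrac{1}{2}L^2(2+3K)T$.
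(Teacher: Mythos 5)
Your proof is correct. For items (i), (ii), (iv) and (v) it follows essentially the paper's route (the chain of Lipschitz constants for $h$; the Gronwall/first-crossing bootstrap giving $|H^e(x,T)|\leq 2M$ and then the StLC bound; the one-line Euler computation; the $T$-independence of $U_c$), and for (vi) you reduce to Theorem~\ref{suficientes1} exactly as the paper does --- your explicit remark that StC and EPC are symmetric, so that $(u_c,U)$ StC can be read as $(U,u_c)$ StC and the conclusion flipped, usefully fills in a step the paper dismisses as ``immediate''. The genuine divergences are in (iii) and, above all, (vii). For (iii) the paper adds and subtracts $\bar F^{E}_{U_c}(y,s)$ inside the integral and closes the estimate with Gronwall, ending up with $K=(e^{L_h\bar T}-1)/\bar T$ and $\rho(s)=L_h^2e^{L_h\bar T}s$; you instead bound $|h(H^e(x,s))-h(y)|\leq L_h(|H^e(x,s)-x|+|x-y|)$ and use $|H^e(x,s)-x|\leq 2L_hs|x|$, which avoids Gronwall entirely and yields the cleaner constants $K=L_h$, $\rho(T)=L_h^2T$; both land in the required EPC form. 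For (vii) the paper does not argue directly at all: it verifies Assumptions 2.1--2.3 and conditions i)--ii) of \citet[Theorem~3.9]{valcar_auto20}, a consistency result for general Runge--Kutta models with input errors, and then specializes to the Euler scheme with $e=0$. Your direct estimate of $\bar F^{E}_U(x,T)-\bar F^e_U(y,T)$ via the integral representation of $F^e$ is a valid self-contained alternative, and you correctly identify the crux: StL of $U$ gives $|u_y|\leq K|y|$, hence the drift bound $|F^e(y,u_y,s)-y|\leq L(2+3K)s|y|$ is linear in $|y|$, which is exactly what lets the remainder fit the $T\rho(T)\max\{|x|,|y|\}$ slot of Definition~\ref{def:EPC}. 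The only point to make explicit is that the Gronwall bound on $|F^e(y,u_y,s)|$ requires the same bootstrap (first-crossing-time contradiction) as in item (ii) to justify using the Lipschitz constant on the ball of radius $2(1+K)M$; you flag this mechanism in (ii) but should note it is needed again here. Your approach buys a transparent, elementary argument; the paper's buys, at the cost of an external citation, the more general statement for arbitrary RK models and nonzero input errors.
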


\begin{proof} 
i) Consider $M\geq 0$ given.
Define $L_{u_c}:=L_{u_c}(M)$ from Assumption \ref{ass:B2}, $L_f:=L_f(M,L_{u_c})$ from Assumption \ref{ass:B1} and $L:= L_f(1+L_{u_c})$.
For all $|x|,|y| \leq M$ we have
\begin{align*}
|h(x)-h(y)|&= |f(x,u_c(x))-f(y,u_c(y))| \leq L_f(|x-y|+|u_c(x)-u_c(y)|) 
 \\ &\leq L_f(1+L_{u_c})|x-y|=L|x-y|. 
\end{align*}
\mer

ii) Consider $M\geq 0$ given.
Let the locally Lipschitz property of $h$
generate 
$L_h:=L_h(M)>0$. Define $\bar L (M):=L_h(2 M)>0$
and $\bar T(M):= \ln{(2)}/\bar L$.
We claim that $| H^e(x,T)| \leq 2M$ for all $|x|\leq M$ and $T \in (0, \bar T)$. For a contradiction, let $x$ and $T'$ be such that $|x|\le M$, $T' \in (0,\bar T)$ and $|H^e(x,T')| > 2M$. Define $\tau := \inf \{T > 0 : |H^e(x,T)| > 2M\}$. Then $|H^e(x,T)| \le 2M$ for all $T \in (0,\tau]$ and $|H^e(x,\tau)| = 2M$ by continuity, with $\tau < \bar T$. We have
    $|H^e(x,\tau)| \le |x| + \int_0^\tau |h(H^e(x,s))| ds
    \le M + \bar L \int_0^\tau |H^e(x,s)| ds.$
Using Gronwall inequality, then $|H^e(x,\tau)| \le M e^{\bar L \tau} < M e^{\bar L \bar T} = 2M$, reaching a contradiction. The claim is thus true.
For all $|x|,|y| \leq M$ and $T \in (0, \bar T)$ then
\begin{align*}
 \lb H^e(x,T)  - H^e(y,T)\rb &\leq \lb x-y \rb  +  \int_0^T  \lb h(H^e(x,s) ) -h(H^e(y,s))  \rb  ds \es
&\leq \lb x-y \rb  + \bar L \int_0^T  \lb H^e(x,s)  -H^e(y,s)  \rb  ds.
\end{align*}
By Gronwall inequality then 
$| H^e(x,T)  - H^e(y,T)|  \le e^{\bar L \bar T} \lb x-y \rb$,  
thus $H^e$ is StLC. \mer

iii) Consider $M>0$ given. Let the StLC property of $H^e$ generate $L_H:=L_H(2M)$ and $\bar T:=\bar T(M)$. From the claim in item~ii), note that  $| H^e(x,T)| \leq 2M$ for all $|x|\leq M$ and $T\in (0,\bar T)$.
Define $L_h:=L_h(2M)$ from the fact that $h$ is locally Lipschitz.
Thus, for all $|x|,|y| \leq M$ and $T\in (0, \bar T)$ we have 
\begin{align}
\allowdisplaybreaks
&| H^e(x,T)-\bar F^{E}_{U_c}(y,T)| =\left|x+\int_{0}^{T} h( H^e(x,s)) ds - y-Th(y) \right|   \es
&\leq \left|x - y \right |+L_h \int_{0}^{T}  \left | H^e(x,s) -y \right| ds     \es
&\leq \left|x - y \right |+L_h \int_{0}^{T}  \bigg | H^e(x,s) -y -\bar F^{E}_{U_c}(y,s) +\bar F^{E}_{U_c}(y,s) \bigg| ds     \es
&\leq \left|x - y \right |+L_h \int_{0}^{T}  \left | H^e(x,s)-\bar F^{E}_{U_c}(y,s) \right|ds  +L_h \int_{0}^{T} T \left| h(y) \right| ds     \es
&\leq \left|x - y \right |+L_h \int_{0}^{T}  \left | H^e(x,s)-\bar F^{E}_{U_c}(y,s) \right|  ds  +L_h T^2\left| h(y) \right|. \notag  
\end{align}
By Gronwall inequality,
\begin{align*}
\left| H^e(x,T)-\bar F^{E}_{U_c}(y,s) \right| &\leq  \left(\left|x - y \right |+L_h T^2\left| h(y) \right| \right) e^{L_h T} 
\es &
\le |x-y| \left( 1 + \frac{e^{L_h T} - 1}{T}\,T \right) + L_h^2 T^2 e^{L_h T}|y|
\es &
\leq   \left(1+KT \right) \left|x - y \right |  +\rho(T) T \left| y \right|
\end{align*}
where $K:=(e^{L_h \bar T} - 1)/\bar T = \left(L_h+\sum_{k=2}^\infty \frac{L_h^k \bar T^{k-1}}{k!}  \right)$
and $\rho \in \K_\infty$ is defined as $\displaystyle\rho(s):= L_h^2 e^{L_h \bar T} s$. 
\mer 

iv) Consider $M,D \geq 0$ given, then
$|F^{E}(x,u,T)-F^{E}(y,v,T)|  \leq |x-y|+ T|f(x,u)-f(y,v)| 
\leq |x-y|+ T L(|x-y|+|u-v|) \leq (1+KT)|x-y|+KT|u-v|$
for all $|x|\leq M$, $|u|,|v|\leq D$, and $T\in[0,\infty)$
where we have used Assumption \ref{ass:B1}  and  $K:=L$. 
\mer 

v) 
By Assumption \ref{ass:B2}  $u_c(x)$ is locally Lipschitz. 
Given that it does not depend on $T$ the result is immediate. 
\mer 

vi) Conditions of  Theorem \ref{suficientes1} hold, thus the result is immediate. 
\mer 

vii) 
We will prove that Assumptions 2.1-2.3 (A2.1-2.3 in the following) and conditions i) and ii) of \cite[Theorem~3.9]{valcar_auto20} hold. 
Define 
$\bar U(x,e,T):=U(x,T)$ for each $q\in \N$ and all $e \in \R^q$.
Thus, the closed-loop Euler and exact models result $\bar F^E_U(x,e,T):=\bar F^E(x,U(x,e,T),T)$ and
 $\bar F^e_U(x,e,T):=\bar F^e(x,U(x,e,T),T)$, respectively.

A2.1: 
It is a
direct consequence of Assumption \ref{ass:B1}. 

A2.2: 
Consider $M,C_u \geq 0$ given and $|x|\leq M$ and $|u|\leq C_u$.
From Assumption \ref{ass:B1} define $L:=L(M,C_u)$ and
function $C_f(r,s):=L(r+s)$, then for all $|x|\leq M$ and $|u|\leq C_u$
$|f(x,u)|=|f(x,u)-f(0,0)|\leq L(|x|+|u|)  
\leq L(M+C_u)
:=C_f(M,C_u)$. 

A2.3:
Consider
$M,E\geq 0$ given. 
Define $L:=L(M)$ from Assumption \ref{ass:B2}
and $K:=K(M)$, $T^*(M)$ from the fact that $U$ is StL.
Define $T_u(M,E):=\min\{1, T^*(M)\}$ and
for all $|x|\leq M$, $|e|\leq E$ and $T \in [0, T_u)$, we have 
$|\bar U(x,e,T)|\leq |U(x,T)-U(0,T)|+|U(0,T)| \leq K|x| 
\leq K M =:C_u(M,E)$.  

Condition i): 
The result is immediate, since we have that $f(0,\bar U(0,e,T))=f(0,U(0,T))=f(0,0)=0$.

Condition ii): 
From  A2.3,
and $T\in(0,T_u)$ we have that $|\bar U(x,e,T)|\leq C_u(M,E)$.
Define
$L:=L(M,C_u(M,E))$ from Assumption \ref{ass:B1}.
For all $ |x^a|,|x^b|\leq M$, $|e|\leq E$ and $T\in(0,T_u)$ we have
$|f(x^a,\bar U(x^a,e,T))-f(x^b,\bar U(x^b,e,T))| 
\leq L(|x^a-x^b|+|\bar U(x^a,e,T)-\bar U(x^b,e,T)|) 
\leq L(|x^a-x^b|+K|x^a-x^b|) \leq L(1+K) |x^a-x^b|$.
Thus, by \cite[Theorem~3.9]{valcar_auto20} the pair $( \bar F^{RK}_U,\bar F^e_U)$ is REPC. By assuming that no errors affect the control input, i.e. $e=0$, REPC coincide with the EPC property and therefore $(\bar F^{RK}_U,\bar F^e_U)$ is EPC for any explicit Runge-Kutta (RK) model.
Given that the Euler model is the simplest RK model,  $(\bar F^{E}_U,\bar F^e_U)$ is EPC and the result follows. 
\mer 
\end{proof}

\begin{proof}[Proof of Lemma \ref{gales->SPSandLES}] 

\indent\emph{Proof of item \ref{item:1_g})} 

$\Longrightarrow$)
Let the GAS property of \eqref{lazocer2}
generate $\beta \in \KL$ from \ref{item:gas}) of Definition \ref{def:conti}.  
We have
$
|x(t)| \leq  \beta(|x_0|, t), \medspace \forall t \geq 0$.
Then,
\begin{equation}
    |x_k| \leq  \beta\lp |x_0|,\sum_{i=0}^{k-1} T_k \rp  \label{eq:SS_2}
\end{equation}
for all $k \in \N_0$, $\{T_i\} \in \Phi(\infty)$ and $x_0\in \R^n$. 
Thus,
$H^e$ is SS-VSR with $T^*:=\infty$, and hence SPS-VSR by the fact that SS-VSR implies SPS-VSR.

$\Longleftarrow$)
Let the SPS-VSR property of $H^e$ generate
$\beta \in \KL$ and $T^*:=T^*(M,R)$ for every $M,R>0$
such that the bound 
$|x_k| \leq  \beta\lp |x_0|,\sum_{i=0}^{k-1} T_k \rp+R$
holds for all $k \in \N_0$, $\{T_i\} \in \Phi(T^*)$ and $|x_0|\leq M$. We next establish GAS with function $\bar \beta := 2\beta \in \KL$.

Consider $x_0 \in \R^n$ and $t\geq 0$ given.  
Define $M:=|x_0|$ 
and the constants $R\in(0,1)$ such that $R \leq \beta (M,t)$ and
$N\in \N$ such that $N\geq \frac{2t}{T^*(M,R)}$.
Define the
constant sampling period sequence as $T_k\equiv \frac{t}{N} \le T^*/2$.
Note that
$\sum_{i=0}^{N-1} T_i = t$ and $x(t)=x_N$. Thus,
\begin{equation}
    |x(t)| \leq \beta\lp |x_0|,\sum_{i=0}^{N-1} T_i \rp+R 
    =  \beta\lp |x_0|,t \rp+R \leq
    \bar \beta \lp |x_0|,t\rp. \label{eqcota1235}
\end{equation}
Given that \eqref{eqcota1235} holds for any 
given $x_0 \in \R^n$ and $t\geq 0$ the result follows. \mer

\indent\emph{Proof of item \ref{item:2_g})} 

$\Longrightarrow$)
Let the LES property of \eqref{lazocer2}
generate  $K\geq 1$ and $R,\lambda>0$.   
We have $|x(t)|\leq K |x_0|e^{-\lambda t}, \medspace \forall t\geq 0$.
Then, $|x_k|\leq K |x_0|e^{-\lambda \sum_{i=0}^{k-1}T_i}$
for all
$k\in \N_0$, $\{T_i\}\in \Phi(\infty)$ and $|x_0|\leq R$ and thus, $H^e$ is LES-VSR.

$\Longleftarrow$)
Let the LES-VSR property of $H^e$ generate
$K\geq 1$ and $R,T^*,\lambda>0$ and consider $|x_0|\leq R$ given
such that the bound 
$|x_k|\leq K |x_0|e^{-\lambda \sum_{i=0}^{k-1}T_i}$
holds for all $k \in \N_0$, $\{T_i\} \in \Phi(T^*)$ and $|x_0|\leq R$. 
Consider a sampling period sequence such that $T_k\equiv \frac{aT^*}{2}$ with $a\in(0,1]$ for all $k \in \N_0$.
Define $t(k):=\frac{aT^*}{2}k$. Thus, for the given initial condition $x_0$ we have
$|x_k| \leq  K |x_0| \exp \lp-\lambda t(k) \rp$ 
for all $k\in \N_0$.
Given that, irrespectibly of $x_0 \in \R^n$, the  previous bound
holds for any $a\in(0,1]$ it also holds for any $t\in \R_{\geq 0}$ and the result follows. \mer

\indent\emph{Proof of item \ref{item:3_g})} 
The result follows directly from the proofs of items \ref{item:1_g}) and \ref{item:2_g}) and the 
GALES definition. 
\mer

\indent\emph{Proof of item \ref{item:0_g})} 
$\Longrightarrow$)
Let the GES property of \eqref{lazocer2}
generate $\lambda >0$ and $K\geq 1$.
We have
$
|x(t)| \leq  K |x_0| \exp(-\lambda t), \medspace \forall t \geq 0$.
Then 
\begin{equation}
    |x_k| \leq  K |x_0| \exp \lp-\lambda \sum_{i=0}^{k-1} T_k \rp  \label{eq:SES_2}
\end{equation}
for all $k \in \N_0$, $\{T_i\} \in \Phi(\infty)$ and $x_0\in \R^n$. 
Thus,
$H^e$ is SES-VSR with $T^*:=\infty$.

$\Longleftarrow$)
Consider that $H^e$ is SES-VSR. 
Consider $x_0 \in \R^n$ given. Define $M:=|x_0|$, and
let the SES-VSR property of $H^e$ generate $\lambda>0$, $K\geq 1$ and $T^*:=T^*(M)$ such that the bound \eqref{eq:SES_2} holds for all $k \in \N_0$, $\{T_i\} \in \Phi(T^*)$ and the given $x_0$. 
Consider a sampling period sequence such that $T_k\equiv \frac{aT^*}{2}$ with $a\in(0,1]$ for all $k \in \N_0$.
Define $t(k):=\frac{aT^*}{2}k$. Thus, for the given initial condition $x_0$ we have
$    |x_k| \leq  K |x_0| \exp \lp-\lambda t(k) \rp $
for all $k\in \N_0$.
Given that, irrespectibly of $x_0 \in \R^n$, this last bound holds for any $a\in(0,1]$ it also holds for any $t\in \R_{\geq 0}$ and the result follows.
\mer

\end{proof}

\begin{proof}[Proof of Proposition \ref{TH1}]
\label{app:TH1}
By the implications in Lemma \ref{lem:impli}, the pairs $( H^e, \bar F^{E}_{U_c})$, $( \bar F^{E}_{U_c}, \bar F^{E}_{U})$ and $(\bar F^{E}_U,\bar F^e_U)$ are EPC. 
By the transitivity of the EPC property  $( H^e,\bar F^e_U)$ is EPC.
\end{proof}

\begin{proof}[Proof of Theorem \ref{TH6}]
\label{app:theorem7}
Conditions of Proposition~\ref{TH1} imply that $(H^e,\bar F^e_U)$
is EPC. Consider that $\bar F^e_U$ or $H^e$ fulfills the SES-VSR, LES-VSR or SLES-VSR properties.
By Theorem \ref{theorem:SEISS1}, $H^e$ or $\bar F^e_U$ fulfills the same properties, respectively. By the implications of Lemma \ref{gales->SPSandLES} the result follows. \end{proof}


\bibliographystyle{elsarticle-harv}
\bibliography{bibliografia201810}

\begin{figure}[ht]
\includegraphics[width=\textwidth]{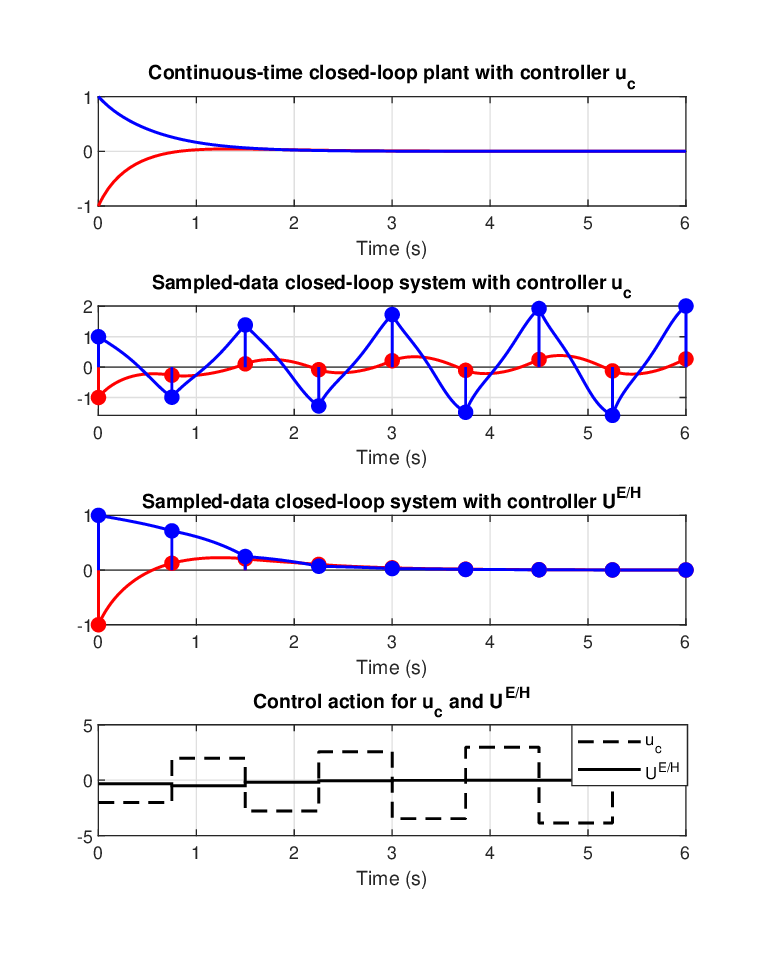}
\vspace{-1cm}
\caption{Evolution of the continuous\-/time closed\-/loop plant with $u_c=-2x_2$ compared with the
evolution of
the sampled\-/data  system \eqref{continuo} in closed\-/loop with controllers $u_c$ and $U^{E/Heun}$ for a constant sampling period $T=0.75$. Blue: $x_1$. Red: $x_2$.}
\label{figure1}
\end{figure}

\end{document}